\theoremstyle{plain}
\newtheorem{theorem}{Theorem}
\newtheorem{lemma}{Lemma}
\newtheorem{cor}{Corollary}
\newcommand\redout{\bgroup\markoverwith{\textcolor{red}{\rule[.5ex]{2pt}{0.4pt}}}\ULon}
\newcommand{\mb}{\mathbf}
\newcommand{\mc}{\mathcal}
\newcommand{\mbb}{\mathbb}
\newcommand{\A}{\mb{A}}
\newcommand{\B}{\mb{B}}
\newcommand{\D}{\mb{D}}
\newcommand{\X}{\mb{X}}
\newcommand{\Y}{\mb{Y}}
\newcommand{\N}{\mb{N}}
\newcommand{\T}{\mb{T}}
\newcommand{\I}{\mb{I}}
\newcommand{\Sig}{\mb{\Sigma}}
\newcommand{\x}{\mb{x}}
\newcommand{\y}{\mb{y}}
\newcommand{\n}{\mb{n}}
\newcommand{\ba}{\mb{a}}
\newcommand{\bb}{\mb{b}}
\newcommand{\bd}{\mb{d}}
\newcommand{\bbR}{\mbb{R}}
\newcommand{\bbE}{\mbb{E}}
\newcommand{\bbP}{\mbb{P}}
\newcommand{\cS}{\mc{S}}
\newcommand{\cD}{\mc{D}}
\newcommand{\cX}{\mc{X}}
\newcommand{\cE}{\mc{E}}
\newcommand{\wh}{\widehat}
\newcommand{\eps}{\varepsilon}
\newcommand{\al}{\alpha}
\newcommand{\vect}{\mathop{\mathrm{vec}}\nolimits}
\newcommand{\supp}{\mathop{\mathrm{supp}}\nolimits}
\newcommand{\RIP}{\mathsf{RIP}}
\newcommand{\norm}[1]{ \left\| #1 \right\| }
\newcommand{\lr}[1]{ \left\{ #1 \right\} }
\begin{document}

% paper title
\title{Minimax Lower Bounds for Kronecker-Structured Dictionary Learning}
\pagenumbering{gobble}
% authors
\author{
\IEEEauthorblockN{Zahra~Shakeri, Waheed~U.~Bajwa, Anand~D.~Sarwate}

\IEEEauthorblockA{
Dept. of Electrical and Computer Engineering, Rutgers University, Piscataway, New Jersey 08854\\
\texttt{\{zahra.shakeri, waheed.bajwa, anand.sarwate\}@rutgers.edu}
\thanks{The work of the authors was supported in part by the National Science Foundation under awards CCF-1525276 and CCF-1453073, and by the Army Research Office under award W911NF-14-1-0295.
}
}
}
\maketitle
%***************
\begin{abstract}
Dictionary learning is the problem of estimating the collection of atomic elements that provide a sparse representation of measured/collected signals or data. This paper finds fundamental limits on the sample complexity of estimating dictionaries for tensor data by proving a lower bound on the minimax risk. This lower bound depends on the dimensions of the tensor and parameters of the generative model. The focus of this paper is on second-order tensor data, with the underlying dictionaries constructed by taking the Kronecker product of two smaller dictionaries and the observed data generated by sparse linear combinations of dictionary atoms observed through white Gaussian noise. In this regard, the paper provides a general lower bound on the minimax risk and also adapts the proof techniques for equivalent results using sparse and Gaussian coefficient models. The reported results suggest that the sample complexity of dictionary learning for tensor data can be significantly lower than that for unstructured data.
\end{abstract}

%\begin{IEEEkeywords}
%IEEEtran, journal, \LaTeX, paper, template.
%\end{IEEEkeywords}
\section{Introduction}
\label{sec:Introduction}

Dictionary learning has recently received significant attention due to the increased importance of finding sparse representations of signals/data. In dictionary learning, the goal is to construct an overcomplete basis using input signals such that each signal can be described by a small number of atoms (columns)~\cite{kreutz2003dictionary}. Although the existing literature has focused on one-dimensional data, many signals in practice are multi-dimensional and have a tensor structure: examples include 2-dimensional images and 3-dimensional signals produced via magnetic resonance imaging or computed tomography  systems. In traditional dictionary learning techniques, multi-dimensional data are processed after vectorizing of signals. This can result in poor sparse representations as the structure of the data is neglected~\cite{zhang2015denoising}.

In this paper we provide fundamental limits on learning dictionaries for multi-dimensional data with tensor structure: we call such dictionaries \textit{Kronecker-structured} (KS).  Several algorithms have been proposed to learn KS dictionaries~\cite{duan2012k,hawe2013separable,zubair2013tensor,peng2014decomposable,zhang2015denoising,soltani2015tensor} but there has been little work on the theoretical guarantees of such algorithms. The lower bounds we provide on the minimax risk of learning a KS dictionary give a measure to evaluate the performance of the existing algorithms.

In terms of relation to prior work, theoretical insights into classical dictionary learning techniques~\cite{aharon2006uniqueness,agarwal2013learning,agarwal2013exact,arora2013new, schnass2014identifiability,schnass2014local,gribonval2014sparse,jung2014performance,
jung2015minimax} have either focused on achievability of existing algorithms~\cite{aharon2006uniqueness,agarwal2013learning,agarwal2013exact,arora2013new, schnass2014identifiability,schnass2014local,gribonval2014sparse} or lower bounds on minimax risk for one-dimensional data~\cite{jung2014performance,jung2015minimax}. The former works provide sample complexity results for reliable dictionary estimation based on the appropriate minimization criteria~\cite{aharon2006uniqueness,agarwal2013learning,agarwal2013exact,arora2013new,schnass2014identifiability,schnass2014local,gribonval2014sparse}. Specifically, given a probabilistic model for sparse signals and a finite number of samples, a dictionary is recoverable within some distance of the true dictionary as a local minimum of some minimization criterion~\cite{schnass2014identifiability,schnass2014local,gribonval2014sparse}. In contrast, works like Jung et al.~\cite{jung2014performance,jung2015minimax} provide minimax lower bounds for dictionary learning under several coefficient vector distributions and discuss a regime where the bounds are tight for some signal-to-noise (SNR) values. Particularly, for a dictionary $\D\in \bbR^{m\times p}$ and neighborhood radius $r$, they show $N=\mc{O}(r^2mp)$ samples suffices for reliable recovery of the dictionary within its local neighborhood.
%For sufficient number of samples, the lower bounds do not depend on the local neighborhood and are also applicable to the global dictionary learning problem.

While our work is related to that of Jung et al.~\cite{jung2014performance,jung2015minimax}, our main contribution is providing lower bounds for the minimax risk of dictionaries consisting of two coordinate dictionaries that sparsely represent 2-dimensional tensor data. The full version of this work generalizes the results to higher-order tensors~\cite{full_version}. The main approach taken in this regard is the well-understood technique of lower bounding the minimax risk in nonparametric estimation by the maximum probability of error in a carefully constructed multiple hypothesis testing problem~\cite{tsybakov2009introduction,yu1997assouad}. As such, our general approach is similar to the vector case~\cite{jung2015minimax}. Nonetheless, the major challenge in such minimax risk analyses is the construction of appropriate multiple hypotheses, which are fundamentally different in our problem setup due to the Kronecker structure of the true dictionary. In particular, for a dictionary $\D$ consisting of the Kronecker product of two coordinate dictionaries $\A\in\bbR^{m_1\times p_1}$ and $\B\in\bbR^{m_2\times p_2}$, where $m=m_1m_2$ and $p=p_1p_2$, our analysis reduces the sample complexity from $\mc{O}(r^2mp)$ for vectorized data~\cite{jung2015minimax} to $\mc{O}(r^2(m_1p_1+m_2p_2))$. Our results hold even when one of the coordinate dictionaries is not overcomplete (note that both $\A$ and $\B$ cannot be undercomplete, otherwise $\D$ won't be overcomplete). Like previous work~\cite{jung2015minimax}, our analysis is local and our lower bounds depend on the distribution of multidimensional data. Finally, some of our analysis relies on the availability of side information about the signal samples. This suggests that the lower bounds can be improved by deriving them in the absence of such side information.

\textbf{\textit{Notational Convention:}} Underlined bold upper-case, bold upper-case, bold lower-case and lower-case letters are used to denote tensors, matrices, vectors, and scalars, respectively. We write $[K]$ for $\{1,\dots,K\}$. The $k$-th column of a matrix $\X$ is denoted by $\x_k$, while $\X_{\mc{I}}$ denotes the matrix consisting of columns of $\X$ with indices $\mc{I}$, $\sum \X$ denotes the sum of all elements of $\X$, and $\I_d$ denotes the $d\times d$ identity matrix. Also, $\|\mb{v}\|_0$ and $\|\mb{v}\|_2$ denote the $\ell_0$ and  $\ell_2$ norms of the vector $\mb{v}$, respectively, while $\|\X\|_2$ and $\|\X\|_F$ denote the spectral and Frobenius norms of $\X$, respectively.

We write $\X_1 \otimes \X_2$ for the \textit{Kronecker product} of two matrices $\X_1\in \bbR^{m\times n}$ and $\X_2\in \bbR^{p\times q}$: the result is an $mp \times nq$ matrix. Given $\X_1\in \bbR^{m\times n}$ and $\X_2\in \bbR^{p\times n}$, we write $\X_1 \ast \X_2$ for their $mp \times n$ \textit{Khatri-Rao product}~\cite{smilde2005multi}: this is essentially the column-wise Kronecker product of matrices. Given two matrices of the same dimension $\X_1,\X_2 \in \bbR^{m\times n}$, their $m\times n$ \textit{Hadamard product} is denoted by $\X_1 \odot \X_2$, which is the element-wise product of $\X_1$ and $\X_2$. For matrices $\X_1$ and $\X_2$, we define their distance to be $\|\X_1-\X_2\|_F$.
We use $f(\eps) = \mc{O}(g(\eps))$ if $\lim_{\eps \rightarrow 0} f(\eps)/g(\eps)=c < \infty$ for some constant $c$.

\section{Background and Problem Formulation}
\label{sec:SysM}
In the conventional dictionary learning setup, it is assumed that an observation $\y\in \bbR^m$ is generated via a fixed dictionary,
	\begin{align}
	\y = \D \x + \n,
	\end{align}
in which the dictionary $\D\in \bbR^{m\times p}$ is an overcomplete basis ($m<p$) with unit-norm columns, $\x\in \bbR^p$ is the coefficient vector, and $\n \in \bbR^m$ is the underlying noise vector. In contrast to this conventional setup, our focus in this paper is on second-order tensor data. Consider the $2$-dimensional observation $\underline{\Y}\in \bbR^{m_1\times m_2}$. Using any separable transform, $\underline{\Y}$ can be written as
	\begin{align}
	\underline{\Y} = (\mb{T}_1^{-1})^T \underline{\X} \mb{T}_2^{-1},
	\end{align}
where $\underline{\X} \in \bbR^{p_1\times p_2}$ is the matrix of coefficients and $\mb{T}_1\in \bbR^{p_1\times m_1}$ and $\mb{T}_2 \in \bbR^{p_2\times m_2}$ are non-singular matrices transforming the columns and rows of $\underline{\Y}$, respectively. Defining $\A\triangleq (\mb{T}_2^{-1})^T$ and $\B \triangleq (\mb{T}_1^{-1})^T$, we can use a property of Kronecker products~\cite{van2000ubiquitous}, $\vect(\B\underline{\X}\A^T)=(\A\otimes \B)\vect(\underline{\X})$, to get the following expression for $\y \triangleq \vect(\underline{\Y})$:
	\begin{align} \label{eq:model}
	\y = (\A \otimes \B) \x + \n
	\end{align}
for coefficient vector $\x \triangleq \vect(\underline{\X}) \in \bbR^{p}$, and noise vector $\n \in \bbR^{m}$, where $p \triangleq p_1p_2$ and $m \triangleq m_1m_2$. In this work, we assume $N$ independent and identically distributed (i.i.d.) noisy observations $\y_k$ that are generated according to the model in \eqref{eq:model}. Concatenating these observations in $\Y\in \bbR^{m\times N}$, we have
	\begin{align} \label{yk}
	\Y=\D\X+\N,
	\end{align}
where $\D \triangleq \A \otimes \B$ is the unknown KS dictionary, $\X \in \bbR^{p\times N}$ is the coefficient matrix which we initially assume to consist of zero-mean random coefficient vectors with known distribution and covariance $\Sig_x$, and $\N\in \bbR^{m\times N}$ is additive white Gaussian noise (AWGN) with zero mean and variance $\sigma^2$.
	
Our main goal in this paper is to derive conditions under which the dictionary $\D$ can possibly be learned from the noisy observations given in \eqref{yk}. In this regard, we assume the true KS dictionary $\D$ consists of unit norm columns and we carry out local analysis. That is, the true KS dictionary $\D$ is assumed to belong to a neighborhood around a fixed (normalized) reference KS dictionary $\D_0 = \A_0 \otimes \B_0$, i.e., $\|\ba_{0,j}\|_2=1 \ \forall j\in [p_1]$, $\|\bb_{0,j}\|_2=1 \ \forall j\in [p_2]$, and $\D_0 \in \cD$:
	\begin{align}
	\cD &\triangleq  \big\{  \D' \in \bbR^{m\times p} :  \|\bd'_{j}\|_2=1 \ \forall j\in [p], \ \D'= \A'\otimes \B', \nonumber \\
	&\qquad \qquad \qquad  \A' \in \bbR^{m_1\times p_1},
		\B' \in \bbR^{m_2\times p_2} \big\}, \ \text{and}\label{eq:D0}\\
	\D &\in \cX(\D_0,r)  \triangleq  \lr{ \D' \in \cD : \norm{\D' -\D_0}_F^2 < r}, \label{Dclass}
	\end{align}
where the radius $r$ is known. It is worth noting here that, similar to the analysis for vector data~\cite{jung2015minimax}, our analysis is applicable to the global KS dictionary learning problem. Finally, some of our analysis in the following also relies on the notion of the \textit{restricted isometry property} ($\RIP$). Specifically, $\D$ satisfies the $\RIP$ of order $s$ with constant $\delta_s$ if
\begin{align}
\forall~s\text{-sparse } \x, (1-\delta_s)\|\x\|_2^2 \leq \|\D\x\|_2^2 \leq (1+\delta_s)\|\x\|_2^2.
\end{align}
%If $\D = \A \otimes \B$, where $\A$ and $\B$ have unit norm columns and satisfy $\RIP(s,\delta_s^A)$ and $\RIP(s,\delta_s^B)$, respectively, then $\D$ satisfies $\RIP(s,\delta_s)$ with $\delta_s\geq \max\{\delta_s^A,\delta_s^B\} $~\cite{jokar2009sparse}.

\subsection{Minimax risk analysis} \label{minimax}
We are interested in lower bounding the minimax risk for estimating $\D$ based on observations $\Y$, which is defined as the worst-case mean squared error (MSE) that can be obtained by the best KS dictionary estimator $\wh{\D}(\Y)$. That is,
\begin{align}
	\eps^* = \inf_{\wh{\D}} \sup_{\D\in \cX(\D_0,r)} \bbE_\Y \lr{ \big\|\wh{\D}(\Y)-\D\big\|_F^2}.
	\end{align}
In order to lower bound this minimax risk $\eps^*$, we resort to the multiple hypothesis testing approach taken in the literature on nonparametric estimation~\cite{yu1997assouad,tsybakov2009introduction}. This approach is equivalent to generating a KS dictionary $\D_l$ uniformly at random from a carefully constructed class $\cD_L=\{\D_1, \dots, \D_L\} \subseteq  \cX(\D_0,r), L \geq 2,$ for a given $(\D_0$, $r)$. Observations $\Y=\D_l\X+\N$ in this setting can be interpreted as channel outputs that are fed into an estimator that must decode $\D_l$. A lowerbound on the minimax risk in this setting depends not only on problem parameters such as the number of observations $N$, noise variance $\sigma^2$, dimensions of the true KS dictionary, neighborhood radius $r$, and coefficient distribution, but also on various aspects of the constructed class $\cD_L$~\cite{tsybakov2009introduction}.

To ensure a tight lower bound, we must construct $\cD_L$ such that the distance between any two dictionaries in $\cD_L$ is sufficiently large and the hypothesis testing problem is sufficiently hard, i.e., distinct dictionaries result in similar observations. Specifically, for $l,l' \in [L]$, we desire a construction such that
	\begin{align}
	\forall l \not= l', &\norm{\D_l - \D_{l'} }_F \geq 2\sqrt{2\eps}\quad \text{and}\nonumber\\
	&D_{KL} \left( f_{\D_l}(\Y)|| f_{\D_{l'}}(\Y) \right) \leq \alpha_L,\label{eqn:minimax}
	\end{align}
where $D_{KL} \left( f_{\D_l}(\Y)|| f_{\D_{l'}}(\Y) \right)$ denotes the Kullback-Leibler (KL) divergence between the distributions of observations based on $\D_l \in \cD_L$ and $\D_{l'} \in \cD_L$, while $\eps$ and $\al_L$ are non-negative parameters. Roughly, the minimax risk analysis proceeds as follows. Considering $\wh{\D}(\Y)$ to be an estimator that achieves $\eps^*$, and assuming $\eps^* < \eps$ and $\D_l$ generated uniformly at random from $\cD_L$, we have $\bbP (\wh{l}(\Y)\neq l) = 0$ for the minimum-distance detector $\wh{l}(\Y)$ as long as $\|\wh{\D}(\Y)-\D_l\|_F < \sqrt{2\eps}$. The goal then is to relate $\eps^*$ to $\bbP(\|\wh{\D}(\Y)-\D_l\|_F \geq \sqrt{2\eps})$ and $\bbP (\wh{l}(\Y)\neq l)$ using Fano's inequality~\cite{yu1997assouad}:
	\begin{align}
    \label{eqn:fano}
	(1- \bbP (\wh{l}(\Y)\neq l)) \log_2 L - 1 \leq I(\Y;l),
	\end{align}
where $I(\Y;l)$ denotes the mutual information (MI) between the observations $\Y$ and the dictionary $\D_l$. Notice that the smaller $\alpha_L$ is in \eqref{eqn:minimax}, the smaller $I(\Y;l)$ will be in \eqref{eqn:fano}. Unfortunately, explicitly evaluating $I(\Y;l)$ is a challenging task in our setup because of the underlying distributions. Similar to \cite{jung2015minimax}, we will instead resort to upper bounding $I(\Y;l)$ by assuming access to some side information $\T(\X)$ that will make the observations $\Y$ conditionally multivariate Gaussian (recall that $I(\Y;l) \leq I(\Y;l|\T(\X))$). Our final results will then follow from the fact that any lower bound for $\eps^*$ given the side information $\T(\X)$ will also be a lower bound for the general case~\cite{jung2015minimax}.

\subsection{Coefficient distribution}
The minimax lower bounds in this paper are derived for various coefficient distributions. First, similar to \cite{jung2015minimax}, we consider arbitrary coefficient distributions for which the covariance matrix $\Sig_x$ exists. We then specialize our results for sparse coefficient vectors and, under additional assumptions on the reference dictionary $\D_0$, obtain a tighter lower bound for some signal-to-noise ratio (SNR) regimes, where $\text{SNR} =\frac{\bbE_\x(\|\D\x\|_2^2)}{\bbE_\n(\|\n\|_2^2)}$.

\subsubsection{General coefficients}
The coefficient vector $\x$ in this case is assumed to be a zero-mean random vector with covariance $\Sig_x$. We also assume access to the side information $\T(\X)=\X$ to obtain a lower bound on $\eps^*$ in this setup.
%
%First, we consider the general case, where $\x$ is a zero-mean random coefficient vector with covariance matrix $\Sig_x$. We make no additional assumption on the distribution of $\x$. We assume side information $\T(\X)=\X$ to obtain a lower bound on the minimax risk.

\subsubsection{Sparse coefficients}
In this case, we assume $\x$ to be an $s$-sparse vector such that the support of $\x$, denoted by $\supp(\x)$, is uniformly distributed over $\cE=\{\cS\subseteq [p]:|\cS|=s\}$:
	\begin{align} \label{swiss}
	\bbP(\supp(\x)=\cS)=\frac{1}{ \binom{p}{s} }
	\quad \text{for any} \ \cS \in \cE.
\end{align}
Further, we model the nonzero entries of $\x$, i.e., $\x_\cS$, as drawn in an i.i.d. fashion from a distribution with variance $\sigma_a^2$:
	\begin{align} \label{iid}
	\bbE_x\{\x_\cS\x_\cS^T|\cS\}=\sigma_a^2 \I_s.
	\end{align}
Notice that an $\x$ under the assumptions of \eqref{swiss} and \eqref{iid} has
	\begin{align} \label{sig_iid}
	\Sig_x = \frac{s}{p}\sigma_a^2 \I_p.
	\end{align}
Further, it is easy to see in this case that $\text{SNR} =\frac{s\sigma_a^2}{m\sigma^2}$. Finally, the side information assumed in this sparse coefficients setup will either be $\T(\X) = \X$ or $\T(\X) = \supp(\X)$.
%
%In the case where the coefficient vector is sparse, additional assumptions on the non-zero entries can be made to obtain a lower bound on the minimax risk using less side information, i.e., $\supp(\x)$, which denotes the support of $\x$. We assume the support of $\x$ is randomly distributed uniformly over $\cE=\{\cS\subseteq [p]:|\cS|=s\}$:
%	\begin{align} \label{swiss}
%	\bbP(\supp(\x)=\cS)=\frac{1}{ \binom{p}{s} }
%	\quad \text{for any} \ \cS \in \cE.
%\end{align}
%For a signal $\x$ with sparsity pattern $\supp(\x)$, we model the non-zero entries of $\x$, i.e., $\x_\cS$, as drawn independently and identically from a probability distribution with known variance $\sigma_a^2$:
%	\begin{align} \label{iid}
%	\bbE_x\{\x_\cS\x_\cS^T|\cS\}=\sigma_a^2 \I_s.
%	\end{align}
%Any $\x$ with sparsity model \eqref{swiss} and nonzero entries satisfying \eqref{iid} has covariance matrix
%	\begin{align} \label{sig_iid}
%	\Sig_x = \frac{s}{p}\sigma_a^2 \I_p.
%	\end{align}
%It can be shown that SNR$=\frac{s\sigma_a^2}{m\sigma^2}$.

\section{Lower Bound for General Coefficients}

We now provide our main result for the lower bound for the minimax risk of the KS dictionary learning problem for the case of general (i.i.d.) coefficient vectors.
\begin{theorem}\label{thm_1}
Consider a KS dictionary learning problem with $N$ i.i.d observations generated according to model~\eqref{eq:model} and the true dictionary satisfying \eqref{Dclass} for some $r$ and $\D_0$. Suppose $\Sig_x$ exists for the zero-mean random coefficient vectors. If there exists an estimator with worst-case MSE $\eps^* \leq \frac{2p(1-t)}{8} \min\{1,\frac{r^2}{4p}\}$, then the minimax risk is lower bounded by
	\begin{align} \label{eq:thm_1}
	\eps^* &\geq
		\frac{C_1  r^2\sigma^2}{Np\|\Sig_x\|_2}\left(c_1(p_1(m_1-1)+p_2(m_2-1)) -3\right)
	\end{align}
for any $0<c_1<\frac{t}{8\log 2}$ and $0<t<1$, where $C_1 = \frac{(1-t)p}{32r^2}$.
\end{theorem}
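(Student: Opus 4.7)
The plan is to follow the standard Fano-based framework outlined after \eqref{eqn:minimax}: reduce the minimax lower bound on $\eps^{\ast}$ to a multiple hypothesis test over a carefully chosen finite class $\cD_L \subseteq \cX(\D_0,r)$, then combine a packing-based lower bound on $\log_2 L$ with a KL-based upper bound on $I(\Y;l)$ via \eqref{eqn:fano}. Concretely, I need (i) a collection $\cD_L=\{\D_1,\dots,\D_L\}$ of Kronecker-structured dictionaries with $\|\D_l-\D_{l'}\|_F \geq 2\sqrt{2\eps}$ and $\log_2 L \gtrsim p_1(m_1-1)+p_2(m_2-1)$, and (ii) a bound $D_{KL}(f_{\D_l}(\Y)\|f_{\D_{l'}}(\Y))\leq \al_L$ that scales like $\|\D_l-\D_{l'}\|_F^2$.

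Step (ii) is the easier part. Since the theorem permits the side information $\T(\X)=\X$, I would use $I(\Y;l)\leq I(\Y;l\,|\,\X)$. Conditionally on $\X$, the columns of $\Y$ are independent Gaussians with mean $\D_l\x_k$ and covariance $\sigma^2\I_m$, so the conditional KL divergence is $\frac{1}{2\sigma^2}\|(\D_l-\D_{l'})\X\|_F^2$. Taking expectation over $\X$ and using $\bbE\|(\D_l-\D_{l'})\X\|_F^2\leq N\|\Sig_x\|_2\|\D_l-\D_{l'}\|_F^2$ gives a uniform bound on the averaged mutual information. Since all dictionaries live in $\cX(\D_0,r)$, any pair satisfies $\|\D_l-\D_{l'}\|_F^2\leq 4r$, yielding an MI bound of order $Nr\|\Sig_x\|_2/\sigma^2$.

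The hard part is step (i): Kronecker structure couples the perturbations of $\A$ and $\B$ nonlinearly, so a generic $\D$-level perturbation of Jung-et-al.\ type is unavailable. I would build $\cD_L$ as a union of two sub-packings that each vary only one coordinate dictionary, $\cD_L=\cD_L^A\cup\cD_L^B$, where $\cD_L^A=\{\A_j\otimes\B_0\}$ and $\cD_L^B=\{\A_0\otimes\B_j\}$. The identity $\|\A\otimes\B_0\|_F=\|\A\|_F\|\B_0\|_F$ gives $\|\A_j\otimes\B_0-\A_{j'}\otimes\B_0\|_F=\sqrt{p_2}\,\|\A_j-\A_{j'}\|_F$, and similarly for the $\B$-type family, so Frobenius separation on the $\cD$ side follows from separation on the coordinate-dictionary side. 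To populate each family I would perturb $\A_0$ (resp.\ $\B_0$) column by column with small step-size sign vectors taken from the tangent space orthogonal to each unit-norm column, then renormalize; indexing these perturbations by a Varshamov--Gilbert code in $\{0,1\}^{p_1(m_1-1)}$ of size $2^{c_1 p_1(m_1-1)}$ (and analogously $2^{c_1 p_2(m_2-1)}$ for $\B$) supplies the desired cardinality. Calibrating the step size to $\min\{1,r/(4\sqrt{p})\}$ ensures that each candidate still lies inside $\cX(\D_0,r)$, which is precisely the origin of the hypothesis $\eps^{\ast}\leq \tfrac{2p(1-t)}{8}\min\{1,r^2/(4p)\}$ and of the prefactor $C_1=(1-t)p/(32r^2)$ in \eqref{eq:thm_1}. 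The main obstacle will be controlling the cross-type distances $\|\A_j\otimes\B_0-\A_0\otimes\B_{j'}\|_F$ uniformly from below; using the orthogonality of the tangent-space perturbations to the reference columns, one can show that these cross distances dominate the minimum of the within-family separations, so a single $\eps$ governs all pairs.

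Finally, plugging $\log_2 L \geq c_1(p_1(m_1-1)+p_2(m_2-1))$ and the MI bound into Fano's inequality \eqref{eqn:fano}, and solving for the largest $\eps$ consistent with the implied constraint $(1-\bbP(\hat l\neq l))\log_2 L - 1 \leq I(\Y;l)$ under the error-probability estimate $\bbP(\hat l\neq l)\leq \bbP(\|\wh\D(\Y)-\D_l\|_F\geq\sqrt{2\eps})$, produces a bound of the form $\eps \gtrsim \frac{r^2\sigma^2}{Np\|\Sig_x\|_2}\bigl(c_1(p_1(m_1-1)+p_2(m_2-1))-3\bigr)$ after absorbing the constants from Fano's $-1$, the Markov-type conversion between error probability and $\eps^{\ast}$, and the $t$ slack. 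The constraint $c_1<t/(8\log 2)$ emerges from matching the KL budget $\al_L$ to the packing cardinality so that Fano's inequality remains informative, giving exactly \eqref{eq:thm_1}.
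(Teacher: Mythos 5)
Your overall architecture (local packing plus Fano, with side information $\T(\X)=\X$ making $\Y$ conditionally Gaussian) matches the paper's, but there is a genuine gap in your Step~(ii) that breaks the final bound. You upper-bound the conditional KL divergence by $\frac{N\|\Sig_x\|_2}{2\sigma^2}\|\D_l-\D_{l'}\|_F^2$ and then substitute the neighborhood diameter $\|\D_l-\D_{l'}\|_F^2\leq 4r$, obtaining $I(\Y;l\,|\,\X)=\mc{O}(Nr\|\Sig_x\|_2/\sigma^2)$. Fano's inequality then yields a relation among $\log_2 L$, $N$, $r$ and $\sigma^2$ that contains no $\eps$ at all, so there is nothing to ``solve for the largest $\eps$'' from, and the $1/N$ decay in \eqref{eq:thm_1} cannot be extracted this way. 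What the paper's Lemma~\ref{lemma_2} actually provides is a \emph{two-sided} distance control on the packing, $\frac{2p}{r^2}(1-t)\eps'\leq\|\D_l-\D_{l'}\|_F^2\leq\frac{8p}{r^2}\eps'$, for a free scale parameter $\eps'$; the \emph{upper} side is what gets substituted into the KL bound, giving $I(\Y;l\,|\,\X)\leq\frac{4Np\|\Sig_x\|_2}{r^2\sigma^2}\eps'$, which is proportional to $\eps'$ and hence to $\eps$ after setting $\frac{2p}{r^2}(1-t)\eps'=8\eps$. Only then does Fano produce $\eps\gtrsim\frac{r^2\sigma^2}{Np\|\Sig_x\|_2}\log_2 L$. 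You gesture at calibrating a step size to the radius, but you never state, let alone use, a pairwise-distance upper bound at scale $\eps$, and the MI bound you actually write down is the diameter one.

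A second, smaller issue: your packing is a union $\cD_L^A\cup\cD_L^B$ of two families each varying only one coordinate dictionary, which has $\log_2 L\approx c_1\max\{p_1(m_1-1),p_2(m_2-1)\}$ up to additive constants, not the sum appearing in \eqref{eq:thm_1}. The paper's cardinality $L=2^{c_1((m_1-1)p_1+(m_2-1)p_2)-1}$ comes from a \emph{product} construction $\D_l=\A_{l_a}\otimes\B_{l_b}$ in which both coordinate dictionaries are perturbed simultaneously (via the sign-matrix codes of Lemma~\ref{lemma_1}); the extra work there is precisely verifying the two-sided separation for pairs that differ in one or both factors, including the cross terms you flag as the ``main obstacle.'' Since $\max$ and sum differ by at most a factor of two this costs only a constant order-wise, but as written your construction does not deliver the stated exponent.
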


\textit{Outline of Proof:} The idea of the proof, as discussed in section~\ref{minimax}, is that we construct a set of $L$ distinct KS dictionaries that satisfy:
\begin{itemize}
\item $\cD_L =\{\D_1,\dots,\D_L\} \subset \cX(\D_0,r)$

\item Any two distinct dictionaries in $\cD_L$ are separated by a minimum distance in the neighborhood, i.e., for any $l,l' \in [L]$ and some positive $\eps \leq \frac{2p(1-t)}{8} \min\{1,\frac{r^2}{4p}\}$:
	\begin{align} \label{eq:distance}
	\|\D_l-\D_l'\|_F \geq 2\sqrt{2\eps}, \ \text{for} \ l \neq l'.
	\end{align}
\end{itemize}
Notice that if the true dictionary, $\D_l \in \cD_L$, is selected uniformly at random from $\cD_L$ in this case then, given side information $\T(\X)=\X$, the observations $\Y$ follow a multivariate Gaussian distribution and an upper bound on the conditional MI $I(\Y;l|\T(\X))$ can be obtained by using an upper bound for KL-divergence of multivariate Gaussian distributions. This bound depends on parameters $\eps, N, m_1,m_2, p_1,p_2, \Sig_x, s, r$, and $\sigma^2$.

Next, assuming \eqref{eq:distance} holds for $\cD_L$, if there exists an estimator $\wh{\D}(\Y)$ achieving the minimax risk $\eps^*\leq \eps$ and the recovered dictionary $\wh{\D}(\Y)$ satisfies $\|\wh{\D}(\Y)-\D_l\|_F < \sqrt{2\eps}$, the minimum distance detector $\wh{l}(\Y)$ can recover $\D_l$. Consequently, the probability of error $\bbP(\wh{\D}(\Y) \neq \D_l) \leq \bbP(\|\wh{\D}(\Y)-\D_l\|_F \geq \sqrt{2\eps})$ can be used to lower bound the conditional MI using Fano's inequality. The obtained lower bound in our case will only be a function of $L$.

Finally, using the obtained upper and lower bounds for the conditional MI:
	\begin{align} \label{eq:proof_out}
	\eta_2 \leq I(\Y;l|\T(\X))\leq \eta_1,
	\end{align}
a lower bound for the minimax risk $\eps^*$ is attained.

A formal proof of Theorem~\ref{thm_1} relies on the following lemmas whose proofs appear in the full version of this work~\cite{full_version}. Note that since our construction of  $\cD_L$ is more complex than the vector case~\cite[Theorem 1]{jung2015minimax}, it requires a different sequence of lemmas, with the exception of Lemma \ref{lemma_3}, which follows from the vector case.

%Although the change in our model suggests that the proof is a simple extension of the proof of Theorem 1 in~\cite{jung2015minimax}, this is not the case as the extension is non-trivial. The construction of a class of KS dictionaries is more complex and except Lemma \ref{lemma_3}, other components of the proof are different from that of Theorem 1 in~\cite{jung2015minimax}.

\begin{lemma} \label{lemma_1}
There exists a set of $L = 2^{c_1(mp) - \frac{1}{2}}$ matrices $\A_l \in \bbR ^{m\times p}$, where elements of $\A_l$ take values $\pm \alpha$ for some $\alpha>0$, such that for $l,l'\in [L]$, $l\ne l'$, any $t>0$ and $c_1 < \frac{1}{2\log 2} \left(\frac{t}{2\alpha^2mp}\right)^2$, the following relation is satisfied:
	\begin{align} \label{eq:lem1}
	\left|\sum (\A_l\odot \A_{l'})\right| \leq t.
	\end{align}
\end{lemma}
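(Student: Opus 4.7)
The plan is to construct the collection by the probabilistic method: draw $L$ random $\pm\alpha$ matrices and show that with positive probability every pair satisfies \eqref{eq:lem1}, which then forces the existence of a deterministic collection with the same property. Specifically, I would let $\A_1,\dots,\A_L \in \bbR^{m\times p}$ have independent entries with each $(\A_l)_{i,j} = \pm\alpha$ uniformly, and all $Lmp$ entries mutually independent.

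Fixing a pair $l \neq l'$ and writing $\A_l = \alpha \mathbf{E}_l$ with Rademacher $(\mathbf{E}_l)_{i,j}\in\{\pm 1\}$, we have
\begin{equation*}
\sum(\A_l\odot\A_{l'}) \;=\; \alpha^2 \sum_{i,j}(\mathbf{E}_l)_{i,j}(\mathbf{E}_{l'})_{i,j}.
\end{equation*}
The $mp$ products $(\mathbf{E}_l)_{i,j}(\mathbf{E}_{l'})_{i,j}$ are i.i.d.\ Rademacher (the product of two independent fair signs is again a fair sign), so Hoeffding's inequality yields
\begin{equation*}
\bbP\!\left(\left|\sum(\A_l\odot\A_{l'})\right| > t\right) \;\leq\; 2\exp\!\left(-\frac{t^2}{2\alpha^4 mp}\right).
\end{equation*}

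A union bound over the $\binom{L}{2}$ pairs gives a total failure probability at most $L^2 \exp(-t^2/(2\alpha^4 mp))$. Demanding this to be strictly less than $1$ and taking logarithms yields $\log_2 L < t^2/(4\alpha^4 mp \log 2) - \tfrac{1}{2}$, which rearranges to the stated threshold on $c_1$ once we substitute $L = 2^{c_1 mp - 1/2}$ (absorbing any leftover factor-of-two slack into the $-\tfrac{1}{2}$ in the exponent). A random ensemble satisfying the bound then exists, and any such realization can serve as the required $\A_1,\dots,\A_L$.

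The proof has no substantive obstacle: the key observation is that the Hadamard product of two independent sign matrices is itself a sign matrix with Rademacher entries, reducing the question to a single Rademacher sum to which Hoeffding applies directly. The only care required is matching the constants in the Hoeffding tail and the union bound to the precise form of $L$ and the threshold on $c_1$ in the lemma statement; essentially all of the work in the broader minimax argument is in the companion lemmas that use this ensemble to build the dictionaries $\D_l = \A_l \otimes \B_l$ in $\cX(\D_0,r)$.
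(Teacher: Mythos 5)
Your proposal is correct: the probabilistic method with i.i.d.\ Rademacher sign matrices, Hoeffding's inequality applied to the $mp$ i.i.d.\ products $\alpha^2(\mathbf{E}_l)_{i,j}(\mathbf{E}_{l'})_{i,j}$, and a union bound over the $\binom{L}{2}$ pairs is exactly the standard construction for this kind of packing lemma, and the lemma's threshold $c_1 < \frac{1}{2\log 2}\bigl(\frac{t}{2\alpha^2 mp}\bigr)^2$ is in fact a factor of two more conservative than what your union bound requires, so the existence claim follows with room to spare (your intermediate assertion that the union bound itself produces the $-\tfrac{1}{2}$ is not quite right, but it is harmless since the stated hypothesis implies the weaker condition you actually need). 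Note that this conference paper defers the proof of Lemma~\ref{lemma_1} to the full version~\cite{full_version}, so there is no in-paper proof to compare against, but your argument is the expected one and is complete.
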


\begin{lemma}\label{lemma_2}
Considering the generative model in \eqref{eq:model}, given some $r>0$ and reference dictionary $\D_0$, there exists a set $\cD_L\subseteq \cX(\D_0,r)$ of cardinality $L=2^{c_1((m_1-1)p_1+(m_2-1)p_2)-1}$ such that for any $0<c_1 < \frac{t^2}{8\log 2}$, any $0<t<1$, and any $\eps'>0$ satisfying
	\begin{align}
	\eps' &< \min \left\{ r^2, \frac{r^4}{4p} \right\} \label{eq:eps_r}
	\end{align}
and any $l,l' \in [L]$, with $l \neq l'$, we have
	\begin{align}
	\frac{2p}{r^2}(1-t) \eps' &\leq \|\D_l-\D_{l'}\|_F^2  \leq \frac{8p}{r^2}\eps'.
	\end{align}
Furthermore, considering the general coefficient model for $\X$ and assuming side information $\T(\X)=\X$, we have
	\begin{align} \label{eq:I_1}
	\forall \ l, \ I(\Y;l|\T(\X)) &\leq \frac{4Np\|\Sig_x\|_2}{r^2\sigma^2}\eps'.
	\end{align}
\end{lemma}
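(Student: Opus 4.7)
My plan is to build the packing family $\cD_L$ by perturbing $\A_0$ and $\B_0$ independently along directions orthogonal to their columns, taking the perturbation patterns from Lemma~\ref{lemma_1}. For each $j \in [p_1]$, fix an orthonormal basis $\mb{V}^A_j \in \bbR^{m_1 \times (m_1-1)}$ of the orthogonal complement of $\ba_{0,j}$, and analogously $\mb{V}^B_j \in \bbR^{m_2 \times (m_2-1)}$ for $\bb_{0,j}$. Instantiating Lemma~\ref{lemma_1} at dimensions $(m_1-1) \times p_1$ and $(m_2-1) \times p_2$ with entry size $\alpha$ (to be calibrated) produces sign-matrix families $\{\U^A_{l_A}\}$ and $\{\U^B_{l_B}\}$ of sizes $L_A = 2^{c_1(m_1-1)p_1 - 1/2}$ and $L_B = 2^{c_1(m_2-1)p_2 - 1/2}$. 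Defining columnwise $\ba_{l_A,j} = \sqrt{1 - \alpha^2(m_1-1)}\,\ba_{0,j} + \mb{V}^A_j (\U^A_{l_A})_j$ and $\bb_{l_B,j}$ analogously makes every column unit-norm (by orthogonality of the two summands and isometry of $\mb{V}^A_j$), so the $L = L_A L_B = 2^{c_1((m_1-1)p_1 + (m_2-1)p_2) - 1}$ products $\D_{(l_A,l_B)} = \A_{l_A} \otimes \B_{l_B}$ all lie in $\cD$; calibrating $\alpha^2$ proportional to $\eps'/r^2$ under hypothesis~\eqref{eq:eps_r} keeps them inside $\cX(\D_0, r)$.

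For the pairwise distance, I would expand $\D_l - \D_{l'} = \A_l \otimes (\B_l - \B_{l'}) + (\A_l - \A_{l'}) \otimes \B_{l'}$, apply $\|\X \otimes \Y\|_F^2 = \|\X\|_F^2 \|\Y\|_F^2$ termwise, and observe that the Frobenius inner product of the two cross terms factors as $\tr(\A_l^T(\A_l - \A_{l'}))\cdot\tr((\B_l - \B_{l'})^T\B_{l'})$, each factor of order $|\sum \U_l \odot \U_{l'}| \leq t$ by Lemma~\ref{lemma_1}. Since the shared radial coefficient $\sqrt{1-\alpha^2(m_i-1)}$ cancels inside each column difference, $\|\A_l - \A_{l'}\|_F^2 = 4\alpha^2 d_H^A$ where $d_H^A$ is the Hamming distance between the two $\A$-sign patterns; Lemma~\ref{lemma_1} pins $d_H^A$ within $t/(2\alpha^2)$ of $(m_1-1)p_1/2$, and similarly for $\B$. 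Summing the leading terms $p_1\|\B_l - \B_{l'}\|_F^2$ and $p_2\|\A_l - \A_{l'}\|_F^2$ and absorbing the cross and $O(\alpha^4)$ remainders into the $t$-slack delivers the two-sided bound $\frac{2p}{r^2}(1-t)\eps' \leq \|\D_l - \D_{l'}\|_F^2 \leq \frac{8p}{r^2}\eps'$.

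For the mutual-information bound, conditioning on $\T(\X) = \X$ makes $\Y$ conditionally matrix-Gaussian with mean $\D_l \X$ and per-column covariance $\sigma^2 \I$. The convexity upper bound $I(\Y; l | \X) \leq L^{-2} \sum_{l, l'} D_{KL}(f_{\D_l}(\Y|\X) \| f_{\D_{l'}}(\Y|\X))$ combined with the Gaussian-KL identity gives pointwise $D_{KL} = \frac{1}{2\sigma^2}\|(\D_l - \D_{l'})\X\|_F^2$; using $\bbE[\X\X^T] = N\Sig_x$ and $\tr(M\Sig_x) \leq \|\Sig_x\|_2 \tr(M)$ on the PSD matrix $M = (\D_l - \D_{l'})^T(\D_l - \D_{l'})$ yields $\bbE_\X \|(\D_l - \D_{l'})\X\|_F^2 \leq N\|\Sig_x\|_2 \|\D_l - \D_{l'}\|_F^2$. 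Inserting the upper Frobenius bound from the preceding paragraph delivers the stated $I(\Y;l|\T(\X)) \leq 4Np\|\Sig_x\|_2 \eps'/(r^2 \sigma^2)$.

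The main obstacle is the lower half of the Frobenius bound: pairs $(l, l')$ sharing one of the two sign-matrix indices retain only one leading term, so the per-pair separation is governed by $\min(m_1-1, m_2-1)$, whereas the upper bound and the $\cX(\D_0,r)$-membership constraint involve $m_1 + m_2 - 2$. Calibrating $\alpha^2$ small enough that the quadratic $(\sqrt{1-\alpha^2(m_i-1)} - 1)^2$ radial correction, the Kronecker cross term, and the Lemma~\ref{lemma_1} slack $t$ all fit below the $(1-t)$ budget simultaneously is the part that requires careful bookkeeping, and is what ultimately produces the twin conditions $\eps' < r^2$ and $\eps' < r^4/(4p)$ in~\eqref{eq:eps_r}.
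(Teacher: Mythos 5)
Your overall architecture matches the paper's proof (given in the full version): unit-norm-preserving perturbations of $\A_0$ and $\B_0$ along the orthogonal complements of their columns, with sign patterns drawn from Lemma~\ref{lemma_1}, a product family of cardinality $L_A L_B$, the Kronecker expansion $\D_l - \D_{l'} = \A_l \otimes (\B_l - \B_{l'}) + (\A_l - \A_{l'}) \otimes \B_{l'}$ for the distances, and the convexity-plus-Gaussian-KL bound for the conditional mutual information. Your MI computation is correct and lands exactly on $\frac{4Np\|\Sig_x\|_2}{r^2\sigma^2}\eps'$.

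There is, however, a genuine gap in the distance calibration. You instantiate both sign families with a single entry size $\alpha$ and propose to fix the two-sided bound by taking $\alpha$ ``small enough.'' That cannot work: with a common $\alpha$, a pair sharing $l_B$ has $\|\D_l - \D_{l'}\|_F^2 \approx 2\alpha^2 p\,(m_1-1)$, a pair sharing $l_A$ has $\approx 2\alpha^2 p\,(m_2-1)$, and a pair differing in both has $\approx 2\alpha^2 p\,(m_1+m_2-2)$. The ratio of the largest to the smallest of these is $(m_1+m_2-2)/\min(m_1-1,m_2-1)$, which is scale-invariant in $\alpha$ and exceeds the factor of $4$ permitted by $\frac{2p}{r^2}(1-t)\eps' \le \cdot \le \frac{8p}{r^2}\eps'$ whenever $m_1$ and $m_2$ are sufficiently unbalanced (e.g., $m_1=2$, $m_2=10$). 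The fix is to use two separate scales $\alpha_A,\alpha_B$ chosen so that each coordinate factor contributes equal Frobenius mass, $\alpha_A^2(m_1-1) = \alpha_B^2(m_2-1) \propto \eps'/r^2$; then shared-index pairs sit near $2p\eps'/r^2$ and fully distinct pairs near $4p\eps'/r^2$, both inside the window. Separately, your claim that each factor of the cross term is ``of order $|\sum \U_l \odot \U_{l'}| \le t$'' is wrong: $\tr(\A_l^T(\A_l-\A_{l'})) = \tfrac12\|\A_l-\A_{l'}\|_F^2 \approx \alpha_A^2(m_1-1)p_1 \pm t$, which is not small. The cross term is nonetheless absorbable, but only because the product of the two traces is second order ($O(\alpha_A^2\alpha_B^2)$) relative to the leading terms, which is the justification you should give.
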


\begin{lemma}\label{lemma_3} Consider the generative model in \eqref{eq:model} with minimax risk $\eps^* \leq \eps$ for some $\eps>0$. Assume there exists a finite set $\cD_L \subseteq \cD$ with $L$ dictionaries satisfying
	\begin{align} \label{eq:lem3_cons}
	\|\D_l-\D_{l'}\|_F^2\geq 8\eps
	\end{align}
for $l \neq l'$. Then for any side information $\T(\X)$, we have
	\begin{align}
	I(\Y;l|\T(\X)) \geq \frac{1}{2} \log_2 (L) -1. \label{LB}
	\end{align}
\end{lemma}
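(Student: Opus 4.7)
The plan is to package a standard minimum-distance decoding argument with Fano's inequality~\eqref{eqn:fano}. Let $\wh{\D}(\Y)$ denote an estimator whose worst-case MSE is at most $\eps^* \leq \eps$, and draw $l$ uniformly at random from $[L]$ independently of $\X$, thereby selecting the true dictionary $\D_l \in \cD_L$. Define the minimum-distance decoder
$\wh{l}(\Y) \triangleq \arg\min_{l'' \in [L]} \|\wh{\D}(\Y) - \D_{l''}\|_F$.
The separation hypothesis~\eqref{eq:lem3_cons} ensures that the open Frobenius balls of radius $\sqrt{2\eps}$ about the $\D_{l''}$'s are pairwise disjoint, so $\|\wh{\D}(\Y) - \D_l\|_F < \sqrt{2\eps}$ forces $\wh{l}(\Y)=l$ by the triangle inequality.

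To bound the decoder's error probability I would apply Markov's inequality. Because each $\D_l$ is a feasible point for the supremum defining $\eps^*$, we have $\bbE_\Y\|\wh{\D}(\Y)-\D_l\|_F^2 \leq \eps^* \leq \eps$, so $\bbP\big(\|\wh{\D}(\Y)-\D_l\|_F^2 \geq 2\eps\big) \leq 1/2$ and therefore $\bbP(\wh{l}(\Y) \neq l) \leq 1/2$. Plugging this into Fano's inequality~\eqref{eqn:fano} immediately gives $I(\Y;l) \geq \tfrac{1}{2}\log_2 L - 1$.

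The remaining step is to upgrade this unconditional bound to the conditional bound~\eqref{LB}. Since $l$ is drawn uniformly from $[L]$ independently of $\X$, it is also independent of any deterministic function $\T(\X)$, so $I(\T(\X);l)=0$. Combining the chain rule $I(\Y,\T(\X);l) = I(\T(\X);l) + I(\Y;l|\T(\X))$ with the data-processing inequality $I(\Y,\T(\X);l) \geq I(\Y;l)$ yields $I(\Y;l|\T(\X)) \geq I(\Y;l) \geq \tfrac{1}{2}\log_2 L - 1$, as required.

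The argument is essentially routine Fano/Markov bookkeeping and, unlike Lemmas~\ref{lemma_1} and~\ref{lemma_2}, exploits neither the Kronecker structure of the dictionary nor the Gaussian noise model, which is why the paper remarks that this lemma can be inherited from the vector-case analysis of Jung et al.~\cite{jung2015minimax}. The only point that deserves genuine care is the last one: verifying that conditioning on $\T(\X)$ cannot decrease the mutual information with the label $l$, which is where the independence of $l$ and $\X$ built into the Bayesian construction is essential.
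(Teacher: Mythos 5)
Your proof is correct and follows exactly the route the paper sketches in Section~\ref{minimax} and then defers to the vector-case argument of Jung et al.~\cite{jung2015minimax}: minimum-distance decoding under the separation condition~\eqref{eq:lem3_cons}, Markov's inequality to bound the error probability by $1/2$, Fano's inequality~\eqref{eqn:fano}, and the independence of $l$ from $\X$ (hence from $\T(\X)$) to pass from $I(\Y;l)$ to the conditional bound~\eqref{LB}. No substantive differences from the paper's intended proof.
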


\begin{proof}[Proof of Theorem \ref{thm_1}]

According to Lemma~\ref{lemma_2}, for any $\eps'$ satisfying \eqref{eq:eps_r}, there exists a set $\cD_L \subseteq \cX(\D_0,r)$ of cardinality $L=2^{c_1((m_1-1)p_1+(m_2-1)p_2)-1}$ that satisfies \eqref{eq:I_1} for any $c_1<\frac{t}{8\log 2}$ and any $0<t<1$. According to Lemma \ref{lemma_3}, if we set $\frac{2p}{r^2}(1-t)\eps' = 8\eps$, \eqref{eq:lem3_cons} is satisfied for $\cD_L$ and provided there exists an estimator with worst case MSE satisfying $\eps^*\leq \frac{2p(1-t)}{8} \min\{1,\frac{r^2}{4p}\}$, \eqref{LB} holds.
Combining \eqref{eq:I_1} and \eqref{LB} we get
	\begin{align} \label{eq:dovar}
	\frac{1}{2} \log_2 (L) -1 \leq I(\Y;l|\T(\X))
		\leq  \frac{32Np\|\Sig_x\|_2}{c_2r^2\sigma^2}\eps,
	\end{align}
where $c_2 = \frac{2p}{r^2}(1-t)$.
Defining $C_1=\frac{(1-t)p}{32r^2}$, \eqref{eq:dovar} translates into
	\begin{align}
	\eps \geq \frac{C_1 r^2\sigma^2}{Np\|\Sig_x\|_2} \left(c_1(p_1(m_1-1)+p_2(m_2-1)) -3\right). \label{thm_1_eps}
	\end{align}
\end{proof}

\section{Lower Bound for Sparse Coefficients}

We now turn our attention to the case of sparse coefficients and obtain lower bounds for the corresponding minimax risk. We first state a corollary of Theorem \ref{thm_1}, for $\T(\X)=\X$.

\begin{cor} \label{cor_1}
Consider a KS dictionary learning problem with $N$ i.i.d observations according to model~\eqref{eq:model}. Assuming the true dictionary satisfies \eqref{Dclass} for some $r$ and the reference dictionary $\D_0$ satisfies $\RIP(s,\frac{1}{2})$, if the random coefficient vector $\x$ is selected according to \eqref{swiss} and there exists an estimator with worst-case MSE error $\eps^* \leq \frac{2p(1-t)}{8} \min\{1,\frac{r^2}{4p}\}$, the minimax risk is lower bounded by
	\begin{align}\label{eq:cor_1}
	\eps^* &\geq
		\frac{C_1   r^2 \sigma^2 }{N s \sigma_a^2  } \left( c_1 \left(p_1(m_1-1)+p_2(m_2-1) \right) -3 \right)
	\end{align}
for any $0<c_1<\frac{t}{8\log 2}$ and $0<t<1$, where $C_1 = \frac{(1-t)p}{32r^2}$.
\end{cor}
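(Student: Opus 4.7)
The plan is to apply Theorem~\ref{thm_1} directly, instantiating its general bound with the explicit coefficient covariance that the sparse model in \eqref{swiss}--\eqref{iid} provides. The sparse coefficient vector $\x$ specified by \eqref{swiss} and \eqref{iid} is a zero-mean random vector whose covariance $\Sig_x$ exists, so it fits the hypotheses of Theorem~\ref{thm_1}. The observation model~\eqref{eq:model}, the local KS class $\cX(\D_0,r)$ in \eqref{Dclass}, the MSE bound $\eps^* \leq \frac{2p(1-t)}{8}\min\{1,\frac{r^2}{4p}\}$, the permissible ranges of $c_1$ and $t$, and the choice of side information $\T(\X)=\X$ are all inherited unchanged from Theorem~\ref{thm_1}.

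From equation~\eqref{sig_iid}, $\Sig_x = \frac{s}{p}\sigma_a^2 \I_p$ is a scalar multiple of the identity, so $\|\Sig_x\|_2 = \frac{s}{p}\sigma_a^2$. Substituting this into the denominator of \eqref{eq:thm_1} collapses $Np\|\Sig_x\|_2$ to $Ns\sigma_a^2$, while every other factor is carried over verbatim; the result is precisely the bound~\eqref{eq:cor_1}.

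There is no real obstacle here: the corollary is essentially the one-line identification $p\|\Sig_x\|_2 = s\sigma_a^2$ evaluated inside Theorem~\ref{thm_1}. The $\RIP(s,\tfrac{1}{2})$ assumption placed on $\D_0$ is not actually consumed by this particular specialization, since conditioning on the full matrix $\T(\X)=\X$ already renders the observations Gaussian with a conditional covariance that depends only on $\sigma^2$. The RIP hypothesis is included for continuity with the sharper sparse-coefficient results that instead use $\T(\X)=\supp(\X)$ as side information, where RIP is genuinely needed to control the conditional covariance of $\D\x$ given its support pattern.
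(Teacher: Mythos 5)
Your proof is correct and matches the paper's own argument, which likewise derives the corollary by substituting $\Sig_x = \frac{s}{p}\sigma_a^2\I_p$ from \eqref{sig_iid} into \eqref{eq:thm_1} so that $p\|\Sig_x\|_2 = s\sigma_a^2$. Your side observation that the $\RIP(s,\frac{1}{2})$ hypothesis is not consumed when the side information is $\T(\X)=\X$ is a fair reading and does not affect the validity of the derivation.
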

This result is a direct consequence of Theorem \ref{thm_1}, by substituting the covariance matrix of $\X$ given in \eqref{sig_iid} in \eqref{eq:thm_1}.

\subsection{Sparse Gaussian coefficients}
In this section, we make an additional assumption on the coefficient vector generated according to \eqref{swiss} and assume non-zero elements of $\x$ follow a Gaussian distribution.
By additionally assuming the non-zero entries of $\x$ are i.i.d., we can write $\x_\cS$ as
\begin{align} \label{gaussian}
\x_\cS \sim \mc{N}(\mb{0},\sigma_a^2\I_s).
\end{align}
Therefore, given side information $\T(\x)=\supp(\x)$, observations $\y$ follow a multivariate Gaussian distribution. We now provide a theorem for the lower bound attained for this coefficient distribution.

\begin{theorem} \label{thm_2}
Consider a KS dictionary learning problem with $N$ i.i.d observations according to model~\eqref{eq:model}. Assuming the true dictionary satisfies \eqref{Dclass} for some $r$ and the reference coordinate dictionaries $\A_0$ and $\B_0$ satisfy $\RIP(s,\frac{1}{2})$, if the random coefficient vector $\x$ is selected according to \eqref{swiss} and \eqref{gaussian} and there exists an estimator with worst-case MSE error $\eps^* \leq \frac{2p(1-t)}{8} \min\{\frac{1}{s},\frac{r^2}{4p}\}$, then the minimax risk is lower bounded by
	\begin{align} \label{eq:thm_2}
	\eps^* &\geq
		\frac{C_2 r^2\sigma^4 }{Ns^2\sigma_a^4}\left(c_1(p_1(m_1-1)+p_2(m_2-1) )-3\right)
	\end{align}
for any $0<c_1<\frac{t}{8\log 2}$ and $0<t<1$, where $C_2=1.58\times 10^{-5}.\dfrac{p(1-t)}{r^2}$.
\end{theorem}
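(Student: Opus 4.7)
The plan is to mirror the architecture of the proof of Theorem~\ref{thm_1}: reuse the ensemble $\cD_L$ produced by Lemma~\ref{lemma_2}, obtain a lower bound on the conditional mutual information via Lemma~\ref{lemma_3}, bound $I(\Y;l|\T(\X))$ from above by a direct calculation, and sandwich $\eps$ between the two. The key change is the choice of side information: here I would take $\T(\X)=\supp(\X)$ rather than $\T(\X)=\X$. Because $\supp(\X)$ reveals strictly less than $\X$ does, the resulting problem is harder and the upper bound on $I(\Y;l|\T(\X))$ is tighter; this is exactly what produces the $\sigma^4/\sigma_a^4$ scaling in~\eqref{eq:thm_2} (versus the $\sigma^2/\sigma_a^2$ of Corollary~\ref{cor_1}), reflecting that with only the support disclosed the dictionary information enters the observations through the \emph{covariance} of the conditional distribution rather than through its mean.

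Conditioned on $\supp(\x_k)=\cS$ and on $\D_l$, the Gaussianity in~\eqref{gaussian} of the nonzero entries makes each observation zero-mean multivariate Gaussian,
\[
\y_k \sim \mc{N}\bigl(\mb{0},\,\Sig_{l,\cS}\bigr), \quad \Sig_{l,\cS} \triangleq \sigma_a^2\,\D_{l,\cS}\D_{l,\cS}^T + \sigma^2\I_m,
\]
so the Gaussian KL divergence is available in closed form. The $\RIP(s,1/2)$ hypothesis on the coordinate dictionaries $\A_0,\B_0$, together with the Kronecker factorization $\bd_{0,(i,j)}=\ba_{0,i}\otimes\bb_{0,j}$, controls the spectrum of $\D_{0,\cS}$ for every $s$-element support $\cS$. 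Combined with the small-radius hypothesis $\eps^*\leq\tfrac{2p(1-t)}{8}\min\{1/s,r^2/(4p)\}$ --- whose sharper $1/s$ factor is exactly what bounds column-restricted perturbations --- every $\D_l\in\cD_L$ preserves a comparable spectral condition, yielding $\sigma^2\I_m \preceq \Sig_{l,\cS} \preceq (\sigma^2+\tfrac{3}{2}\sigma_a^2)\I_m$ uniformly in $l$ and $\cS$.

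The core of the argument is a second-order Taylor expansion of the Gaussian KL. Writing $X = \Sig_{l',\cS}^{-1}(\Sig_{l,\cS}-\Sig_{l',\cS})$ and using $\log\det(\I+X)=\tr(X)-\tfrac{1}{2}\tr(X^2)+\cdots$, the linear trace term cancels in $2D_{KL}=\tr(X)-\log\det(\I+X)$, leaving $\tfrac{1}{2}\tr(X^2)$ as the leading contribution --- this is the algebraic source of the $\sigma^{-4}$ scaling. Using $\|\Sig_{l',\cS}^{-1}\|_2\leq\sigma^{-2}$, the spectral bound $\|\D_{l,\cS}\|_2\leq\sqrt{3/2}$, and $\|\D_{l,\cS}\D_{l,\cS}^T-\D_{l',\cS}\D_{l',\cS}^T\|_F\leq C\|\D_{l,\cS}-\D_{l',\cS}\|_F$, I can bound $D_{KL}\leq C'(\sigma_a^4/\sigma^4)\|\D_{l,\cS}-\D_{l',\cS}\|_F^2$. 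Averaging over the uniform support via $\bbE_\cS\|\D_{l,\cS}-\D_{l',\cS}\|_F^2=(s/p)\|\D_l-\D_{l'}\|_F^2$, substituting Lemma~\ref{lemma_2}'s bound $\|\D_l-\D_{l'}\|_F^2\leq(8p/r^2)\eps'$, summing over the $N$ i.i.d.\ samples, and carefully accounting for the additional $s$ factor that emerges from controlling the Taylor remainder under the $1/s$ radius restriction yields an upper bound of the form
\[
I(\Y;l|\supp(\X)) \leq C_3\,\frac{Ns^2\sigma_a^4}{r^2\sigma^4}\,\eps'.
\]

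Combining this with the Fano lower bound from Lemma~\ref{lemma_3}, substituting $L = 2^{c_1((m_1-1)p_1+(m_2-1)p_2)-1}$, and making the change of variable $(2p/r^2)(1-t)\eps' = 8\eps$ exactly as in the proof of Theorem~\ref{thm_1}, solving for $\eps$ yields~\eqref{eq:thm_2}. The main obstacle is the Gaussian KL calculation: one must (i) exhibit the cancellation of the linear term that produces the $\sigma^4$ (rather than $\sigma^2$) denominator, (ii) control the higher-order Taylor remainder uniformly over $\cD_L$ and all $|\cS|=s$, and (iii) propagate the coordinate RIP to every perturbed dictionary in $\cD_L$. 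The sharper $\min\{1/s,r^2/(4p)\}$ hypothesis on $\eps^*$ in this theorem (compared with $\min\{1,r^2/(4p)\}$ in Corollary~\ref{cor_1}) is precisely what is needed to make this uniform remainder and RIP control valid.
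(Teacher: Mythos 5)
Your proposal follows essentially the same route as the paper: the paper packages your conditional-Gaussian KL analysis (the covariance-only dependence on $\D_l$ producing the $\sigma_a^4/\sigma^4$ scaling, together with the sharper restriction $\eps'\leq\min\{r^2/s,\,r^4/(4p)\}$) as Lemma~\ref{lemma_4}, a variant of Lemma~\ref{lemma_2}, and then combines it with Lemma~\ref{lemma_3} via the identical substitution $\frac{2p}{r^2}(1-t)\eps'=8\eps$ that you describe. The only cosmetic difference is that you propose to reuse Lemma~\ref{lemma_2}'s ensemble and derive the mutual-information bound inline, whereas the paper states the construction together with the bound $I(\Y;l|\supp(\X))\leq 7921(\sigma_a/\sigma)^4 N s^2\eps'/r^2$ as a separate lemma whose proof is deferred to the full version.
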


\begin{figure}
\centering
\includegraphics[width=0.4\textwidth]{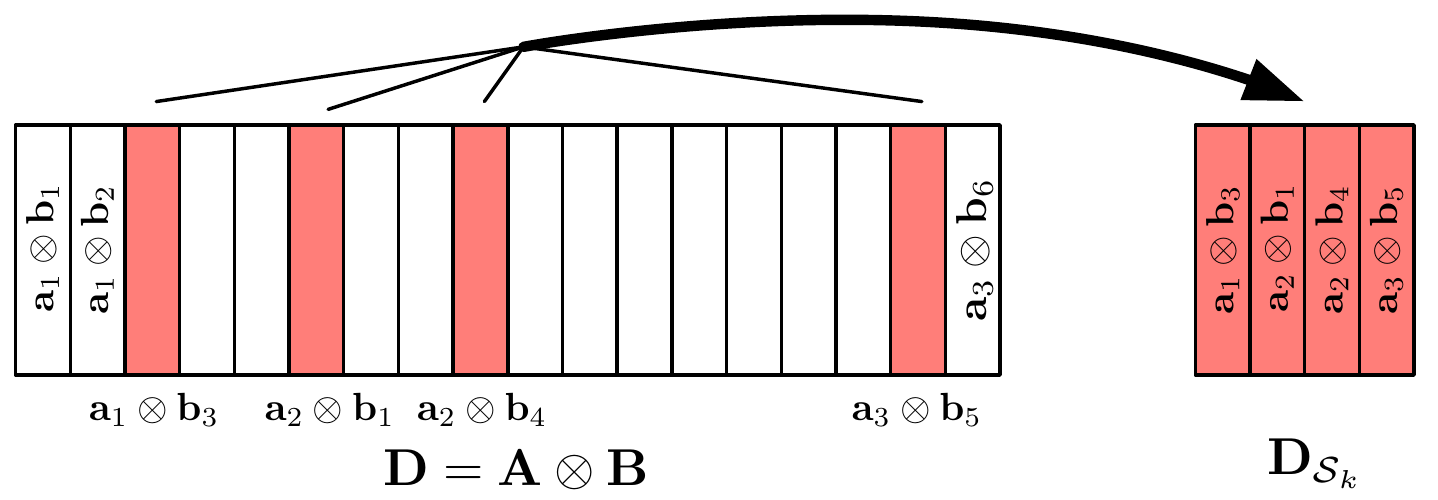}
%\vspace{-1.25\baselineskip}
\caption{An illustration of $\D_{l,\cS_k}$ with $p_1=3$, $p_2=6$ and sparsity $s=4$. Here, $\cS_{ka}=\{1, 2, 2, 3\}$, $\cS_{kb}=\{3, 1, 4, 5\}$, and $\cS_k = \{3,7,10,17\}$.}
\label{fig:1}
\vspace{-1.25\baselineskip}
\end{figure}

\textit{Outline of Proof:}
The constructed dictionary class $\cD_L$ in Theorem \ref{thm_2} is similar to that in Theorem \ref{thm_1}. But the upper bound for the conditional MI, $I(\Y;l|\supp(\X))$, differs from that in Theorem \ref{thm_1}  as the side information is different.

Given the true dictionary $\D_l$ and support $\cS_k$ for the $k$-th coefficient vector $\x_k$, let $\D_{l,\cS_k}$ denote the columns of $\D_l$ corresponding to the non-zeros elements of $\x_k$. In this case, we have
	\begin{align}
	\y_k = \D_{l,\cS_k} \x_{\cS_k} +\n_k,\  k \in [N].
	\end{align}
We can write the subdictionary $\D_{l,\cS_k}$ in terms of the Khatri-Rao product of two smaller matrices:
	\begin{align} \label{eq:KR_p}
	\D_{l,\cS_k} = \A_{l_a,\cS_{ka}} \ast \B_{l_b,\cS_{kb}},
	\end{align}	
where $\cS_{ka} =\{i_k\}_{k=1}^s, i_k \in [p_1]$, and $\cS_{kb} =\{i'_k\}_{k=1}^s, i'_k \in [p_2]$, are multisets with the following relationship with $\cS_k = \{i''_k\}_{k=1}^s, i''_k \in [p]$: $	i''_k = (i_k-1) p_2  + i'_k , \ k \in [s]$.
Note that $\A_{l_a,\cS_{ka}}$ and $\B_{l_b,\cS_{kb}}$ are not submatrices of $\A_{l_a}$ and $\B_{l_b}$, as $\cS_{ka}$ and $\cS_{kb}$ are multisets.
Figure \ref{fig:1} provides a visual illustration of \eqref{eq:KR_p}. Therefore, the observations follow a multivariate Gaussian distribution with zero mean and covariance matrix:
	\begin{align} \label{eq:thm2_sig}
	\Sig_{k,l}  &= \sigma_a^2 (\A_{l_a,S_{ka}} \ast \B_{l_b,\cS_{kb}})(\A_{l_a,S_{ka}} \ast \B_{l_b,\cS_{kb}})^T  + \sigma^2 \I_s
	\end{align}
and we need to obtain an upper bound for the conditional MI using \eqref{eq:thm2_sig}. We state a variation of Lemma \ref{lemma_2} necessary for the proof of Theorem~\ref{thm_2}. The proof of the lemma is again provided in~\cite{full_version}.

\begin{table}
\setlength{\extrarowheight}{5pt}
\renewcommand{\arraystretch}{1.3}
\caption{Order-wise lower bounds on the minimax risk for various coefficient distributions}
\label{table:1}
\centering
\begin{tabular}{|l|c|c|}
\hline
\diagbox[width=2.7cm]{\textbf{Distribution}}{\textbf{Dictionary}}
& Unstructured~\cite{jung2015minimax}  & Kronecker (this paper) \\
\hline
	Sparse  &
	$\dfrac{ r^2 p}{N \text{SNR}}$ &
	$\dfrac{r^2(m_1p_1+m_2p_2)}{N m \text{SNR}}$
	\\[2ex]
\hline
	Gaussian Sparse &
	$\dfrac{r^2 p}{Nm\text{SNR}^2}$ &
	$\dfrac{r^2(m_1p_1+m_2p_2)}{Nm^2\text{SNR}^2}$
	\\[2ex]
\hline
\end{tabular}
\end{table}

%\begin{table}
%\renewcommand{\arraystretch}{1.3}
%\caption{Order-wise lower bounds on the minimax risk for various coefficient distributions}
%\label{table:1}
%\centering
%\begin{tabular}{|p{2cm}|p{2.5cm}|p{3cm}|}
%\hline
%	Coefficient Distribution &
%	Unstructured Dictionary~\cite{jung2015minimax} &
%	KS Dictionary [this paper]\\
%\hline
%	Sparse  &
%	%\small
%	$ \dfrac{ r^2 p}{N \text{SNR}}$ &
%	%\small
%	$ \dfrac{r^2(m_1p_1+m_2p_2)}{N m \text{SNR}}$
%	\\
%\hline
%	%\small
%	Gaussian Sparse &
%	%\small
%	$\dfrac{r^2 p}{Nm\text{SNR}^2}$ &
%	%\small
%	$\dfrac{r^2(m_1p_1+m_2p_2)}{Nm^2\text{SNR}^2}$
%	\\
%\hline
%\end{tabular}
%\end{table}
%

\begin{lemma}\label{lemma_4} Considering the generative model in \eqref{eq:model}, given some $r>0$ and reference dictionary $\D_0$, there exists a set $\cD_L\subseteq \cX(\D_0,r)$ of cardinality $L=2^{c_1((m_1-1)p_1+(m_2-1)p_2)-1}$ such that for any $0<c_1<\frac{t^2}{8\log 2}$, any $0<t<1$, and any $\eps'>0$ satisfying
	\begin{align} \label{eps:2}
	0<\eps'\leq \min\left\{\frac{r^2}{s},\frac{r^4}{4p}\right\},
	\end{align}
and any $l,l' \in [L]$, with $l \neq l'$, we have
\begin{align} \label{8eps}
\frac{2p}{r^2}(1-t)\eps' \leq \|\D_l-\D_{l'}\|_F^2\leq \frac{8p}{r^2}\eps'.
\end{align}
Furthermore, assuming the reference coordinate dictionaries $\A_0$ and $\B_0$ satisfy  $\RIP(s,\frac{1}{2})$ and the coefficient matrix $\X$ is selected according to \eqref{swiss} and \eqref{gaussian}, considering side information $\T(\X)=\supp(\X)$, we have:
	\begin{align} \label{eq:I_2}
	I(\Y;l|\T(\X))&\leq 7921 \left(\frac{\sigma_a}{\sigma}\right)^4 \frac{Ns^2}{r^2}\eps'.
	\end{align}
\end{lemma}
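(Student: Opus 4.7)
The plan is to split Lemma \ref{lemma_4} into two essentially independent claims. The packing inequality \eqref{8eps} has the same content as its counterpart in Lemma \ref{lemma_2}, since both depend only on the Kronecker structure of the dictionaries and on the geometry of $\cX(\D_0,r)$, not on the coefficient distribution or the side information. I therefore reuse verbatim the construction of $\cD_L$ from Lemma \ref{lemma_2} (apply Lemma \ref{lemma_1} to produce sign-perturbed ensembles of the coordinate dictionaries, then Kronecker these against $\A_0$ and $\B_0$), and \eqref{8eps} follows from the same elementary computation.

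The novel ingredient is the MI bound. Under side information $\T(\X)=\supp(\X)$, the sparse-Gaussian hypothesis \eqref{gaussian} makes each observation conditionally Gaussian: given $\cS_k$, $\y_k\sim\mc{N}(0,\Sig_{k,l})$ with $\Sig_{k,l}=\sigma_a^2\D_{l,\cS_k}\D_{l,\cS_k}^T+\sigma^2\I_m$, and the $N$ observations are conditionally independent given the supports. I would first invoke the standard convexity bound
\begin{align*}
I(\Y;l\mid\T(\X))\;\leq\;\frac{1}{L^2}\sum_{l,l'}\sum_{k=1}^N\bbE_{\cS_k}\!\bigl[D_{KL}\bigl(\mc{N}(0,\Sig_{k,l})\,\|\,\mc{N}(0,\Sig_{k,l'})\bigr)\bigr],
\end{align*}
and then exploit the closed form $D_{KL}=\tfrac12\sum_i(\mu_i-\log(1+\mu_i))$, where $\{\mu_i\}$ are the eigenvalues of $A\triangleq\Sig_{k,l'}^{-1/2}(\Sig_{k,l}-\Sig_{k,l'})\Sig_{k,l'}^{-1/2}$. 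Provided $|\mu_i|\leq\tfrac12$ uniformly, the inequality $x-\log(1+x)\leq x^2$ yields $D_{KL}\leq\tfrac12\|A\|_F^2$.

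Two structural facts then control $\|A\|_F^2$. First, $\Sig_{k,l'}\succeq\sigma^2\I_m$ implies $\|\Sig_{k,l'}^{-1}\|_2\leq\sigma^{-2}$. Second, the $\RIP(s,1/2)$ hypothesis on $\A_0$ and $\B_0$ propagates to an RIP for $\D_0=\A_0\otimes\B_0$ with some constant $\delta<1$, and because $\D_l\in\cX(\D_0,r)$ is close to $\D_0$ the same RIP transfers to each $\D_l$ (with a constant absorbed into the final numerical prefactor). In particular, $\|\D_{l,\cS_k}\|_2\leq\sqrt{1+\delta}$ and $\rnk(\Sig_{k,l}-\Sig_{k,l'})\leq 2s$. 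Using the rank bound to pass to operator norm, $\|A\|_F^2\leq 2s\|A\|_2^2\leq 2s\sigma^{-4}\|\Sig_{k,l}-\Sig_{k,l'}\|_2^2$, and then using the polarization identity
\begin{align*}
\D_{l,\cS_k}\D_{l,\cS_k}^T-\D_{l',\cS_k}\D_{l',\cS_k}^T=\D_{l,\cS_k}(\D_{l,\cS_k}-\D_{l',\cS_k})^T+(\D_{l,\cS_k}-\D_{l',\cS_k})\D_{l',\cS_k}^T
\end{align*}
yields $D_{KL}\leq C\,s(\sigma_a/\sigma)^4\|\D_{l,\cS_k}-\D_{l',\cS_k}\|_F^2$ for an absolute constant $C$. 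Averaging over $\cS_k$ produces $\bbE_{\cS_k}\|\D_{l,\cS_k}-\D_{l',\cS_k}\|_F^2=(s/p)\|\D_l-\D_{l'}\|_F^2$; the packing bound \eqref{8eps} and the summation over the $N$ observations then collapse to the advertised form $7921(\sigma_a/\sigma)^4 Ns^2\eps'/r^2$.

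The main obstacle I foresee is the uniform verification of $|\mu_i|\leq\tfrac12$, which is exactly what dictates the sharpened admissibility condition \eqref{eps:2}: the new factor $r^2/s$ inside the minimum (absent in its Lemma \ref{lemma_2} counterpart \eqref{eq:eps_r}) is precisely what is needed to keep the rank-$2s$ perturbation $\sigma_a^2(\D_{l,\cS_k}\D_{l,\cS_k}^T-\D_{l',\cS_k}\D_{l',\cS_k}^T)$ small compared to the noise floor $\sigma^2\I_m$, uniformly in the choice of support. Once this regime condition is in place and the RIP transfer from $\A_0,\B_0$ to $\D_l$ is made quantitative, the remainder is a careful bookkeeping exercise to pin down the explicit constant.
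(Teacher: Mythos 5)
The paper itself defers the proof of Lemma \ref{lemma_4} to the full version, so the comparison here is against the outline preceding it. Your architecture matches that outline: reuse the $\cD_L$ construction of Lemma \ref{lemma_2} for the packing bound \eqref{8eps}, then use the conditional Gaussianity $\y_k\mid\cS_k\sim\mc{N}(\mb{0},\Sig_{k,l})$ with $\Sig_{k,l}=\sigma_a^2\D_{l,\cS_k}\D_{l,\cS_k}^T+\sigma^2\I_m$, bound the conditional MI by averaged pairwise KL divergences, exploit the rank-$2s$ structure of $\Sig_{k,l}-\Sig_{k,l'}$ and the RIP of $\A_0,\B_0$ to control $\norm{\D_{l,\cS_k}}_2$, and average over the support via $\bbE_{\cS_k}\norm{\D_{l,\cS_k}-\D_{l',\cS_k}}_F^2=(s/p)\norm{\D_l-\D_{l'}}_F^2$. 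That skeleton is sound and does produce the advertised $N s^2(\sigma_a/\sigma)^4\eps'/r^2$ scaling.

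The genuine gap is in the step $D_{KL}=\tfrac12\sum_i(\mu_i-\log(1+\mu_i))\le\tfrac12\norm{A}_F^2$, which you correctly flag as the crux but then resolve incorrectly. The inequality $x-\log(1+x)\le x^2$ requires $\mu_i\ge-\tfrac12$, i.e. roughly $\sigma_a^2\norm{\D_{l,\cS_k}\D_{l,\cS_k}^T-\D_{l',\cS_k}\D_{l',\cS_k}^T}_2\le\sigma^2/2$. This is a joint condition on $\sigma_a^2/\sigma^2$ and on the dictionary separation; since the lemma is stated with no SNR restriction, no condition on $\eps'$ alone can enforce it, and in any case \eqref{eps:2} only yields $\norm{\D_l-\D_{l'}}_F^2\le 8p/s$, which is not small. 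The standard repair is to bound one KL by the symmetrized divergence, $D_{KL}\le\tfrac12\tr[\Sig_{k,l'}^{-1}(\Sig_{k,l}-\Sig_{k,l'})\Sig_{k,l}^{-1}(\Sig_{k,l}-\Sig_{k,l'})]\le\tfrac{1}{2\sigma^4}\norm{\Sig_{k,l}-\Sig_{k,l'}}_F^2$, which needs only $\Sig_{k,\cdot}\succeq\sigma^2\I_m$ and no eigenvalue smallness; the extra factor of $s$ for the $s^2$ scaling then still comes from $\norm{M}_F^2\le\rnk(M)\norm{M}_2^2\le 2s\norm{M}_2^2$ together with your polarization identity. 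Relatedly, the role you assign to the $r^2/s$ term in \eqref{eps:2} is almost certainly wrong: its purpose is to keep $\norm{\D_{l,\cS_k}}_2$ bounded, since the per-column perturbation of $\D_l$ from $\D_0$ scales like $\sqrt{\eps'}/r$, so $\norm{\D_{l,\cS_k}-\D_{0,\cS_k}}_2$ picks up a $\sqrt{s\eps'}/r$ factor and $\eps'\lesssim r^2/s$ keeps it $O(1)$, allowing the RIP upper bound to transfer from $\D_{0,\cS_k}$ to $\D_{l,\cS_k}$. A smaller slip: $\RIP(s,\tfrac12)$ on $\A_0$ and $\B_0$ does not give $\D_0$ an RIP constant $\delta<1$ (the product bound gives $1+\delta\le 9/4$); only the upper inequality $\norm{\D_{0,\cS_k}}_2\le\norm{\A_{0,\cdot}}_2\norm{\B_{0,\cdot}}_2\le 3/2$ survives, which fortunately is all your argument uses.
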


\begin{proof}[Proof of Theorem \ref{thm_2}]
According to Lemma \ref{lemma_4}, for any $\eps'$ satisfying \eqref{eps:2}, there exists a set $\cD_L \subseteq \cX(\D_0,r)$ of cardinality $L=2^{c_1((m_1-1)p_1+(m_2-1)p_2)-1}$ that satisfies \eqref{eq:I_2} for any $c_1<\frac{t}{8\log 2}$ and any $0<t<1$. Setting $\frac{2p}{r^2}(1-t)\eps'=8\eps$, \eqref{eq:lem3_cons} is satisfied for $\cD_L$ and, provided there exists an estimator with worst case MSE satisfying $\eps^*\leq \frac{2p(1-t)}{8} \min\{\frac{1}{s},\frac{r^2}{4p}\}$, \eqref{LB} holds. Consequently,
	\begin{align} \label{dovar2}
	\frac{1}{2} \log_2 (L) -1 \leq I(\Y;l|\T(\X))
		\leq  \frac{8\times 7921}{c_2} \left(\frac{\sigma_a}{\sigma}\right)^4\frac{Ns^2}{r^2} \eps,
	\end{align}
where $c_2=\frac{2p}{r^2}(1-t)$. Defining $C_2=1.58\times10^{-5}.\frac{p(1-t)}{r^2}$, \eqref{dovar2} can be written as
	\begin{align} \label{thm_2.1_eps}
	\eps \geq C_2\big(\frac{\sigma}{\sigma_a}\big)^4\frac{r^2\left(c_1(p_1(m_1-1)+p_2(m_2-1) )-3\right)}{Ns^2}
		.
	\end{align}
\end{proof}

\section{Discussion and Conclusion}

In this paper we follow an information-theoretic approach to provide lower bounds for the worst-case MSE of KS dictionaries that generate 2-dimensional tensor data. Table \ref{table:1} lists the dependence of the known lower bounds on the minimax rates on various parameters of the dictionary learning problem and the SNR$=\dfrac{s\sigma_a^2}{m\sigma^2}$. Compared to the results in~\cite{jung2015minimax} for the unstructured dictionary learning problem, which are not stated in this form, but can be reduced to this, we are able to decrease the lower bound in all cases by reducing the scaling $\mathcal{O}(pm)$ to $\mathcal{O}(p_1m_1+p_2m_2)$ for KS dictionaries. This is intuitively pleasing since the minimax lower bound has a linear relationship with the number of degrees of freedom of the KS dictionary, which is ($p_1m_1+p_2m_2$), and the square of the neighborhood radius $r^2$. The results also show that the minimax risk decreases with a larger number of samples $N$ and increased SNR.
Notice also that in high SNR regimes, the lower bound in \eqref{eq:cor_1} is tighter, while \eqref{eq:thm_2} results in a tighter lower bound in low SNR regimes.
Our bounds depend on the signal distribution and imply necessary sample complexity scaling $N=\mc{O}(r^2(m_1p_1+m_2p_2)$. Future work includes extending the lower bounds for higher-order tensors and also specifying a learning scheme that achieves these lower bounds.

%\section*{Acknowledgement}
%The work of the authors was supported in part by the National Science Foundation under awards CCF-1525276 and CCF-1453073, and by the Army Research Office under award W911NF-14-1-0295.

%\bibliographystyle{IEEEtran}
%\bibliography{IEEEabrv,refs}

%\input{docs/MethodologyV2}
%\input{docs/AnalysisV2}
%\input{docs/ProblemFormulation}
%\input{docs/ConvergenceAnalysisV2}
%\input{docs/Conclusion}
%\small
% Generated by IEEEtran.bst, version: 1.13 (2008/09/30)
%\begin{thebibliography}{10}
%\providecommand{\url}[1]{#1}
%\csname url@samestyle\endcsname
%\providecommand{\newblock}{\relax}
%\providecommand{\bibinfo}[2]{#2}
%\providecommand{\BIBentrySTDinterwordspacing}{\spaceskip=0pt\relax}
%\providecommand{\BIBentryALTinterwordstretchfactor}{4}
%\providecommand{\BIBentryALTinterwordspacing}{\spaceskip=\fontdimen2\font plus
%\BIBentryALTinterwordstretchfactor\fontdimen3\font minus
%  \fontdimen4\font\relax}
%\providecommand{\BIBforeignlanguage}[2]{{%
%\expandafter\ifx\csname l@#1\endcsname\relax
%\typeout{** WARNING: IEEEtran.bst: No hyphenation pattern has been}%
%\typeout{** loaded for the language `#1'. Using the pattern for}%
%\typeout{** the default language instead.}%
%\else
%\language=\csname l@#1\endcsname
%\fi
%#2}}
%\providecommand{\BIBdecl}{\relax}
%\BIBdecl
%\end{thebibliography}

\end{document}